\newcommand{\PPPP}{{\textsf{P}}}
\newcommand{\NP}{{\textsf{NP}}}
\newcommand{\BPP}{{\textsf{BPP}}}
\newcommand{\MA}{{\textsf{MA}}}
\newcommand{\BQP}{{\textsf{BQP}}}
\newcommand{\QMA}{{\textsf{QMA}}}
\renewcommand{\>}{\rangle}
\newcommand{\cH}{\mathcal{H}}
\newcommand{\ket}[1]{\left| #1\right\rangle}      
\newcommand{\bra}[1]{\left\langle #1\right|}      
\newcommand{\kets}[1]{| #1 \rangle}                 
\newcommand{\bras}[1]{\langle #1 |}                 
\newcommand{\braket}[2]{\langle #1 | #2 \rangle}         
\newcommand{\ii}{\mathbb{I}} 
\newcommand{\norm}[1]{\left\| #1\right\|}                
\newcommand{\ep}{\epsilon} 
\newtheorem{definition}{Definition}
\newtheorem{theorem}{Theorem}
\newtheorem*{theorem3}{Theorem 3'}
\newtheorem{lemma}{Lemma}
  \newcommand{\eq}{\begin{equation}}
  \newcommand{\en}{\end{equation}}
  \newcommand{\eqa}{\begin{eqnarray}}
  \newcommand{\ena}{\end{eqnarray}}
\begin{document}
    \title{Fast Amplification of QMA}
        \author{Daniel Nagaj\thanks{Research Center for Quantum Information, Institute of Physics, Slovak
        Academy of Sciences, D\'ubravsk\'a cesta 9, 84215 Bratislava, Slovakia,
        and Quniverse, L\'{i}\v{s}\v{c}ie \'{u}dolie 116, 841 04, Bratislava, Slovakia.
        Email: \texttt{daniel.nagaj@savba.sk}},
        \quad
        Pawel Wocjan\thanks{School of Electrical Engineering and Computer
        Science, University of Central Florida, Orlando, FL~32816, USA. Email:
        \texttt{wocjan@eecs.ucf.edu}},
        \quad
        Yong Zhang\thanks{School of Electrical Engineering
        and Computer Science, University of Central Florida, Orlando, FL~32816, USA. Email:
        \texttt{yong@cs.ucf.edu}}
        }
        \date{\today}
    \maketitle

\begin{abstract}
Given a verifier circuit for a problem in $\QMA$, we show how to exponentially amplify 
the gap between its acceptance probabilities in the `yes' and `no' cases,
with a method that is quadratically faster than the procedure given
by Marriott and Watrous \cite{MWqma}.
Our construction is natively quantum, based on the analogy of 
a product of two reflections and a quantum walk.
Second, in some special cases we show how to amplify the acceptance probability 
for good witnesses to 1, making a step towards the proof that 
$\QMA$ with one-sided error ($\QMA_1$) is equal to $\QMA$. 
Finally, we simplify the filter-state method to search for $\QMA$ witnesses
by Poulin and Wocjan \cite{PoulinWocjan}.
\end{abstract}


\section{Introduction}

Which decision problems (yes/no questions) can be efficiently 
solved on a classical computer? All such problems constitute the complexity class $\PPPP$. The goal of many algorithm designers is to place problems into $\PPPP$ by finding efficient algorithms for them\footnote{See e.g. the recent example of an algorithm for deciding whether a number is prime \cite{primesinP}.}.
Another notable complexity class called $\NP$, is the class of problems whose solutions can be efficiently verified. However, finding these solutions might be hard, as $\NP$ contains notoriously hard problems like \textsc{Satisfiability} \cite{SATreference}. Whether all the problems in $\NP$ could be actually solved in polynomial time (i.e. $\PPPP \stackrel{?}{=} \NP$) is one of the most interesting open questions of modern computer science \cite{PvsNP}. 

The problems efficiently solvable with randomized circuits, i.e. with circuits that are allowed to fail with some bounded probability, constitute the class $\BPP$. When we now allow the solution-verifying procedure in the definition of $\NP$ to have a small probability of failure, we get the complexity class $\MA$. This acronym stands for ``Merlin-Arthur'', as the verifying protocol for the problems in $\MA$ goes like this: an all-powerful Merlin provides a `proof' (also called a witness) which a rational Arthur verifies. 
The problems in $\MA$ are those for which Merlin can convince Arthur that the answer to his question is `yes' if it is so, while he has a low probability of fooling Arthur in the `no' cases. 

The world is quantum mechanical, so it is natural to ask what can be efficiently computed on a quantum computer? All such problems form the complexity class $\BQP$, and include such problems as factoring \cite{Shor} and approximating the Jones Polynomial \cite{WocjanYard}.
Kitaev \cite{KitaevBook} and Watrous \cite{WatrousQMA} defined a quantum analogue of the class $\MA$, calling it $\QMA$ (Quantum Merlin-Arthur). 

Let us look at the verifying procedure in more detail, starting with an exact definition of $\QMA$.
\begin{definition}[QMA]
\label{QMAdef}
	Consider a language (a set of `yes'/`no' questions) $L=L_{yes}\cup L_{no}$. Denote its instances $x$. 
	The language $L$ belongs to the class $\QMA$ if 
	\begin{enumerate}
		\item	there exists a uniform family of quantum verifier circuits $V$
					working on $n=poly(|x|)$ qubits and $m=poly(|x|)$ ancillae,
					and two numbers $a, b$, with separation	$a-b$ 
					lower bounded by an inverse polynomial in $n$,
		\item for $x\in L_{yes}$ (the answer to the question $x$ is `yes'), 
					there exists a witness $\ket{\psi}$ such that the circuit $V$ on 
					$\ket{\psi}\ket{0}^{\otimes m}$ outputs `yes' (Arthur is convinced) with probability 			
					$p_{max} \geq a$,
		\item for $x\in L_{no}$, for any state $\ket{\varphi}$, 
					the circuit $V$ on $\ket{\varphi}\ket{0}^{\otimes m}$ outputs `yes' 
					(is fooled) with probability $p_{max} \leq b$,
	\end{enumerate}
\end{definition}

When the separation $a-b$ for a verifier circuit $V$ is small, it could take Merlin many verification rounds to convince Arthur that the answer to the question $x$ really is `yes'. However, the separation $a-b$ can be {\em amplified} by modifying the original verifier circuit, obtaining a new amplified circuit
with strong promise bounds 
\begin{eqnarray}
	x\in L_{yes}: \ && p_{max} \geq 1-e^{-r}, \label{strongpromise}\\
	x\in L_{no}: \ && p_{max} \leq e^{-r}, \nonumber
\end{eqnarray}
for some constant $r$. 

The first such amplification procedure by Kitaev \cite{KitaevBook} uses a circuit made from 
\begin{eqnarray}
	N = \frac{c r}{(a-b)^2}
	\label{Nkit}
\end{eqnarray}
(for some constant $c$)
parallel copies of the circuit $V$, with majority voting at the end.
Kitaev showed\footnote{Note that it was necessary to show that entangled witnesses wouldn't help Merlin to cheat in the `no' cases.} that this new circuit has the strong promise bounds \eqref{strongpromise}. The drawback of this procedure is that Merlin now has to provide an $N$ times longer witness.
 
In \cite{MWqma}, Marriott and Watrous showed that a single quantum witness of length $n$ suffices, finding a verification procedure which reuses this witness many times. Their amplified circuit 
uses $N$ copies \eqref{Nkit} of the original circuit $V$ and its conjugate $V^{\dagger}$,
interspersed with some additional operations. The procedure ends with $N$ measurements
and {\em classical processing} of the results. Arguments employing Chernoff bounds are 
then used to establish the strong bounds \eqref{strongpromise} in the `yes' and `no' cases. 
We give further details about their approach in Section \ref{MWsection}. 

The first result of our paper is a new and faster $\QMA$ amplification method.
\begin{theorem}[Fast $\QMA$ Amplification]
	\label{amptheorem}
	Consider a verifier circuit $V$ for a problem in $\QMA$,
	acting on $n$ qubits and $m$ ancillae, with promise bounds $a$ and $b$.
	There exists an amplified verifier circuit $V'$ 
	acting on $n$ qubits and $poly(n)$ ancillae, using
	\begin{eqnarray}
		N' = \frac{c'r}{a-b}
	\end{eqnarray}
	evaluations of the original circuit $V$ and its conjugate $V^\dagger$, 
	with promise bounds amplified to \eqref{strongpromise}.
\end{theorem}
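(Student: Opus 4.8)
The plan is to realize Arthur's test as a \emph{single} coherent procedure built from two reflections, and to replace Marriott--Watrous's repeated measurement and Chernoff counting by quantum phase estimation of the associated walk operator, which is where the quadratic speedup comes from. First I would introduce the two projectors that govern the verifier: $\Pi_0 = \ii_n \otimes \ket{0^m}\bra{0^m}$, which checks that Arthur's ancillae are correctly initialized, and $\Pi_1 = V^\dagger \, \Pi_{\mathrm{out}} \, V$, where $\Pi_{\mathrm{out}}$ projects the designated output qubit onto acceptance. For any witness $\ket{\psi}$ the state $\ket{\psi}\ket{0^m}$ lies in the range of $\Pi_0$, and its acceptance probability is exactly $\norm{\Pi_1 \ket{\psi}\ket{0^m}}^2$. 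From these I would form the reflections $R_0 = 2\Pi_0 - \ii$ and $R_1 = 2\Pi_1 - \ii$ and the walk operator $W = R_1 R_0$; one application of $W$ costs exactly one use of $V$ and one of $V^\dagger$ (inside $R_1 = V^\dagger(2\Pi_{\mathrm{out}}-\ii)V$), while $R_0$ needs no evaluation of $V$.

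Next I would invoke Jordan's lemma to decompose $\cH$ into one- and two-dimensional subspaces simultaneously invariant under $\Pi_0$ and $\Pi_1$. In each two-dimensional block the two projectors project onto lines meeting at an angle $\theta_j$, the initialized direction has acceptance probability $\cos^2\theta_j$, and $W$ acts as a rotation by $2\theta_j$ with eigenvalues $e^{\pm 2i\theta_j}$. The operator $\Pi_0 \Pi_1 \Pi_0$ restricted to the range of $\Pi_0$ is then diagonal over the blocks with eigenvalues $\cos^2\theta_j$, so its top eigenvalue is exactly $p_{max}$ and is attained by a witness supported on a single block. Hence in the `yes' case there is a block with $\cos^2\theta_j = p_{max} \ge a$, whereas in the `no' case \emph{every} block satisfies $\cos^2\theta_j \le b$; this last point automatically rules out entangled cheating witnesses, since the promise is over all states and the block structure covers the whole space. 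I would also record the elementary estimate $\arccos\sqrt{b} - \arccos\sqrt{a} \ge a-b$, which is what makes the phase gap visible to phase estimation.

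The core of the argument is to run phase estimation of $W$ on the received state $\ket{\psi}\ket{0^m}$, read off an estimate $\hat\phi$ of the eigenphase $\pm 2\theta_j$, and accept iff the implied probability $\cos^2(\hat\phi/2)$ exceeds the threshold $(a+b)/2$. Since the witness lies in the range of $\Pi_0$, it is an equal superposition of the two eigenvectors of each block, so phase estimation reveals a genuine $\theta_j$: in the `yes' case a single optimal block gives $\cos^2\theta_j \ge a$, and in the `no' case only values $\cos^2\theta_j \le b$ can appear. Because the angular gap is $\Omega(a-b)$, estimating the phase precisely enough to land on the correct side of the threshold costs $O(1/(a-b))$ applications of $W$, i.e.\ $O(1/(a-b))$ evaluations of $V$ and $V^\dagger$. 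Crucially, the witness register is never measured directly during phase estimation---only the phase ancillae are---so the single length-$n$ witness is genuinely reused rather than consumed.

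The step I expect to be the main obstacle is boosting the confidence from constant to $1 - e^{-r}$ while keeping the number of evaluations at $O(r/(a-b))$, rather than the exponential-in-$r$ blow-up that a naive choice of phase-estimation precision would incur. My plan is to run phase estimation only to constant precision $\Theta(a-b)$ with constant success probability, and then repeat it $O(r)$ times, taking a median of the estimates. The enabling observation is that the first round collapses the witness onto a single Jordan block (an eigenstate of $W$) with a definite $\theta_j$---a good block in the `yes' case, a bad one in the `no' case---so all later rounds estimate the \emph{same} angle with independent estimation noise, and a standard Chernoff bound on the median drives the error to $e^{-r}$. Assembling the $O(r)$ phase-estimation rounds, their $O(1/(a-b))$ applications of $W$ each, and the $O(\log\tfrac{1}{a-b})$ phase ancillae into one circuit $V'$ on $n$ qubits and $poly(n)$ ancillae, with total cost $N' = c' r/(a-b)$ and the strong promise bounds, completes the construction; verifying that this collapse-and-median analysis never lets a `no' instance drift above threshold is the one place demanding genuine care.
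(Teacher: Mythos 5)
Your proposal is correct and follows essentially the same route as the paper: the same projectors $\Pi_0,\Pi_1$, Jordan's lemma, the walk operator $W=(2\Pi_1-\ii)(2\Pi_0-\ii)$ whose eigenphases encode the acceptance probability, an $\Omega(a-b)$ phase-gap bound, phase estimation to precision $O(a-b)$, and $O(r)$ repetitions with a median to reach the $e^{-r}$ bounds (the paper's ``median lemma''). The subtlety you flag at the end --- that the repeated rounds act on the post-measurement state rather than on fresh independent copies --- is handled in the paper only implicitly by concatenating the phase estimations and invoking the median lemma, so your explicit ``collapse onto a block, then estimate the same angle'' framing is a fair reading of the same argument.
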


Note that the dependence of $N'$ on $\frac{1}{a-b}$ is quadratically better than in \eqref{Nkit}.
This speedup is possible because our circuit $V'$ uses intrinsically quantum methods for producing its final answer. There are two ideas behind our construction.
First, we utilize the connection between a product of two reflections and a generalized quantum walk.
Second, the final {\em coherent processing} in the circuit is based on phase estimation \cite{phaseestimate} 
(or alternatively, the filter state method of Poulin and Wocjan \cite{PoulinWocjan})
instead of classical majority voting (as in \cite{KitaevBook}) or counting (as in \cite{MWqma}).

When $a$ in the definition of $\QMA$ is exactly 1, we get a special case of $\QMA$, called $\QMA$$_1$,
or $\QMA$ with one-sided error. In this case, there exists a witness which the circuit $V$
accepts without failing. A typical example of a problem in this class is Bravyi's Quantum $k$-SAT \cite{BravyiQSAT}.
In the classical world, the class $\MA$ with probabilistic verifier circuits
can be amplified to $\MA$ without errors \cite{MAerror1,MAerror2}. However, it is a big open question, whether the corresponding quantum classes are equal, i.e. whether $\QMA=\QMA_1$. In \cite{Aqma1}, Aaronson argued that this problem is hard and gave an oracle separating them. 
The second result of our paper is a step towards answering whether $\QMA\stackrel{?}{=}\QMA_1$.
We show that for a class of $\QMA$ verifier circuits, we can amplify the probability promise $a$ up to 1 exactly, as opposed to exponentially approaching it as in \eqref{strongpromise}. 

\begin{theorem}[$\QMA(a^*,b) = \QMA_1$]
	\label{qma1theorem}
	Let $V$ be a verifier circuit for an instance of a language in $\QMA$, with promise bounds $a$ and $b$,
	and let $p_{max}$ be its highest acceptance probability $p_{max}$ in the `yes' case.
	When $\varphi_{\min} = \frac{1}{\pi} \arccos \sqrt{p_{max}}$
	can be expressed exactly by $t=polylog(|x|)$ bits, there exists a circuit $V^{*}$
	on $n+t$ qubits and $poly(n)$ ancillae whose acceptance probability in the `yes' case is exactly 1. 
	Said alternatively, $\QMA_1$ is equal to this subclass of $\QMA$.
\end{theorem}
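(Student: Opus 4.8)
The plan is to recycle the quantum walk built for Theorem~\ref{amptheorem} and to replace the \emph{approximate} phase estimation used there by an \emph{exact} one, which is made possible precisely by the hypothesis that $\varphi_{\min}$ is a dyadic fraction of $t$ bits. Recall the two projectors attached to the verifier $V$: the input projector $\Pi = I_n \otimes \ket{0}\bra{0}^{\otimes m}$ onto the subspace of correctly initialized ancillae, and the pulled-back acceptance projector $Q = V^\dagger (\ket{1}\bra{1}\otimes I)\,V$. The acceptance probability of a witness $\ket{\psi}$ is $\braket{\xi}{Q\,\xi}$ with $\ket{\xi}=\ket{\psi}\ket{0}^{\otimes m}$, so $p_{max}$ is the top eigenvalue of $\Pi Q \Pi$ on the range of $\Pi$. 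First I would form the walk $W=(2Q-I)(2\Pi-I)$ and invoke Jordan's lemma exactly as in Theorem~\ref{amptheorem}: the space splits into two-dimensional $W$-invariant blocks, on which $W$ rotates by twice the principal angle $\vartheta_j$ between the two projectors, together with one-dimensional blocks. Writing $p_j=\cos^2\vartheta_j$, the eigenvalues of $W$ on the $j$-th two-dimensional block are $e^{\pm 2 i\vartheta_j}$, and the definition $\varphi_{\min}=\tfrac1\pi\arccos\sqrt{p_{max}}$ identifies the smallest angle $\vartheta_{\min}=\pi\varphi_{\min}$ with the optimal witness.

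The key structural point is that the optimal witness is supported entirely on eigenvectors of $W$ carrying phase exactly $\pm\varphi_{\min}$. Indeed, the maximizer $\ket{\xi_{\mathrm{opt}}}$ of $\braket{\xi}{Q\,\xi}$ is the top eigenvector of $\Pi Q \Pi$, hence lives in the block (or blocks) with principal angle $\vartheta_{\min}$; inside such a block it decomposes as $\ket{\xi_{\mathrm{opt}}}=\alpha\ket{w_+}+\beta\ket{w_-}$, where $\ket{w_\pm}$ are eigenvectors of $W$ with eigenvalues $e^{\pm 2\pi i\varphi_{\min}}$. I would note that degeneracy of $\vartheta_{\min}$ across several blocks causes no trouble, since each such block contributes only the phases $\pm\varphi_{\min}$; and if $p_{max}=1$ the instance is already in $\QMA_1$ with nothing to prove. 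Thus the optimal witness sits on genuine eigenstates of $W$ whose phase is the $t$-bit dyadic number $\varphi_{\min}$.

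The circuit $V^{*}$ then appends a $t$-qubit register, prepares it in the uniform superposition, applies the controlled powers $W^{2^0},\dots,W^{2^{t-1}}$, and performs the inverse quantum Fourier transform, accepting iff the register reads one of the two $t$-bit strings encoding $\varphi_{\min}$ and $1-\varphi_{\min}$. Because $\varphi_{\min}$ (and hence $1-\varphi_{\min}$) is \emph{exactly} representable with $t$ bits, ordinary phase estimation is error-free on $\ket{w_\pm}$: fed $\ket{\xi_{\mathrm{opt}}}$, the register collapses to $\varphi_{\min}$ with probability $|\alpha|^2$ and to $1-\varphi_{\min}$ with probability $|\beta|^2$, so $V^{*}$ accepts with probability $|\alpha|^2+|\beta|^2=1$. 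This gives the completeness claim; the $t$ extra qubits account for the $n+t$ working qubits, while the $m$ ancillae of $V$ and the controlled-$W$ workspace make up the $\mathrm{poly}(n)$ ancillae.

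To complete the identification $\QMA(a^*,b)=\QMA_1$ I would check both inclusions. The direction $\QMA_1\subseteq\QMA(a^*,b)$ is immediate: a one-sided verifier has $p_{max}=1$, so $\varphi_{\min}=0$ is trivially a $t$-bit number. For the converse I would verify soundness of $V^{*}$: in the `no' case every block has $p_j\le b$, hence every phase obeys $\varphi_j\ge\varphi_b=\tfrac1\pi\arccos\sqrt{b}>\varphi_{\min}$ (and symmetrically $1-\varphi_j\le 1-\varphi_b$), so the two accepting strings lie outside the band $[\varphi_b,1-\varphi_b]$ in which all `no'-case phases concentrate. The probability that phase estimation reports a value as far as $\varphi_{\min}$ from the true phase is then bounded by the standard phase-estimation tail estimate, which can be driven below any fixed constant by running the estimation with extra precision bits (absorbed into the $\mathrm{poly}(n)$ ancillae); since these surplus low-order bits vanish on $\ket{w_\pm}$, exactness---and therefore perfect completeness---survives. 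I expect the main obstacle to be exactly this reconciliation: one must place the acceptance threshold strictly between $\varphi_{\min}$ and $\varphi_b$ and argue that exact acceptance at $\varphi_{\min}$ can coexist with an inverse-polynomial soundness gap, which is what upgrades the construction from a merely completeness-amplified circuit to a bona fide $\QMA_1$ verifier.
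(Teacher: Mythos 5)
Your proposal is correct and follows essentially the same route as the paper: run the fast-amplification walk $W=F_1F_0$, observe that a dyadic $\varphi_{\min}$ makes phase estimation exact on the optimal witness (which is an equal superposition of the $e^{\pm 2\pi i\varphi_{\min}}$ eigenvectors), accept only on a readout matching $\varphi_{\min}$, and inherit soundness from the gap $\varphi_b-\varphi_a=\Omega(a-b)$. The one bookkeeping difference is that the paper has Merlin send the $t$-bit string $\ket{\varphi_i}_t$ as part of the witness (this is what the ``$n+t$ qubits'' in the statement counts) and guards soundness by first checking $\cos^2(\pi\varphi_i)\ge a$ on the claimed value, whereas you hardwire $\varphi_{\min}$ into $V^{*}$ and use the $t$ qubits as the phase register---both work, but you should say explicitly where $V^{*}$ obtains $\varphi_{\min}$.
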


\noindent Note that besides the witness, the new verifier circuit $V^{*}$ needs to receive the description of $a^*$. Also note that obtaining a result of this type
is possible only because of the coherent phase-estimation processing. Marriott and Watrous' $\QMA$ amplification scheme can not be used to amplify the probabilities to $1$ exactly, because of the inherent probabilistic nature of the measurements.


The paper is organized as follows. 
First, in Section \ref{projectorsection} we establish 
two important geometrical facts about two projectors, 
required for the analysis of the amplification protocols. 
In Section \ref{MWsection} we revisit 
the witness-preserving $\QMA$ amplification protocol of Marriott and Watrous \cite{MWqma}.
Then in Section \ref{phasesection} we give a faster $\QMA$ amplification method
utilizing phase estimation instead of classical final processing. 
Second, in Section \ref{qma1section} we show that in a special case when the 
maximum acceptance probability of the verifier circuit has a particular description, 
we can amplify it to 1 exactly.
Finally, in Section \ref{solvingsection}, inspired by our fast QMA amplification,
we simplify the method for preparing $\QMA$ witnesses by Poulin and Wocjan \cite{PoulinWocjan},
based on filter states.

We conclude with a discussion of our results in Section \ref{conclusions}.
Finally, Appendix \ref{jordansection} contains a new proof of the relationship of the eigenvalue gap of a classical random walk and the phase gap of a quantum walk, based on Jordan's lemma.


\section{Fast QMA Amplification}
\label{fastqma}

We arrive at our new QMA amplification scheme, based on phase estimating a certain unitary operator,
in several steps. First, with the help of Jordan's lemma, we look at how two projectors
act in a Hilbert space. This helps us to understand the Marriott and Watrous \cite{MWqma} amplification
scheme in Section \ref{MWsection}, based on alternative projective measurements.
Second, Jordan's lemma, together with a connection from quantum walks, leads us
to realize that a product of the reflections about the supports of the projectors
acts as a rotation within certain subspaces of the Hilbert space. Instead of projective measurements,
we thus base our amplification scheme of Section \ref{phasesection} on the phase estimation
of these rotations. This leads to a speedup in the number of the required evaluations of the
verifier circuit.


\subsection{Facts about Two Projectors and Two Reflections}
\label{projectorsection}

In this geometrically-focused section, we establish several facts required for 
understanding both Marriott and Watrous' witness preserving $\QMA$ amplification 
of Section \ref{MWsection}, and our new method in Section \ref{phasesection}.

First, consider two projectors $\Pi_0$ and $\Pi_1$ on the Hilbert space $\cH$.
Jordan's lemma \cite{Jordan75} (see also Appendix \ref{jordansection}) tells us that given two projectors, we can decompose our Hilbert space into 
\begin{enumerate}
	\item two-dimensional subspaces $S_i$ invariant under $\Pi_{0}$ and $\Pi_{1}$, and
	\item one-dimensional subspaces $T_j$, on which $\Pi_{0} \Pi_{1}$
				is an identity or a zero-rank projector. 
\end{enumerate}
Let us focus on the more interesting two-dimensional subspaces.
For each of them, we can choose a basis $\{\ket{v_i}, \kets{v_i^{\perp}}\}$, obeying
\begin{eqnarray}
	\Pi_0 \ket{v_i} = \ket{v_i}, \quad  
	\Pi_0 \kets{v_i^\perp} = 0. \label{vbasis}
\end{eqnarray}
We can also choose another basis $\{\ket{w_i}, \kets{w_i^{\perp}}\}$ for $S_i$ 
from the eigenvectors of $\Pi_1$, obeying
\begin{eqnarray}
	\Pi_1 \ket{w_i} = \ket{w_i},
	\quad
	\Pi_1 \kets{w_i^\perp} = 0. \label{wbasis}
\end{eqnarray}
We can also make a choice of phases so that $\braket{v_i}{w_i}$ is a real positive number,
and define the {\em principal angle} 
\begin{eqnarray}
	\varphi_i = \frac{1}{\pi}\arccos(|\braket{v_i}{w_i}|). 
	\label{principal}
\end{eqnarray}
Let $p_i$ be the expectation value of $\Pi_1$ in the state $\ket{v_i}$.
In Sections \ref{MWsection} and \ref{phasesection}, $p_i$ will be 
the probability that the verifier circuit $V$ accepts the state $\ket{v_i}$.
The projector $\Pi_1$ restricted to the subspace $S_i$ is $\ket{w_i}\bra{w_i}$, so we have
\begin{eqnarray}
	p_i = \bra{v_i} \Pi_1 \ket{v_i} = \braket{v_i}{w_i}\braket{w_i}{v_i} = \cos^2 (\pi \varphi_i).
	\label{paccept}
\end{eqnarray}
Using $p_i$, we now express
\begin{eqnarray}
	\ket{v_i} &=& \sqrt{p_i} \ket{w_i} + \sqrt{1-p_i} \kets{w_i^\perp}, \label{crossprobabilities}\\
	\ket{w_i} &=& \sqrt{p_i} \ket{v_i} + \sqrt{1-p_i} \kets{v_i^\perp}. \nonumber
\end{eqnarray}
This symmetrical relationship is an essential element in Marriott and Watrous' 
witness-preserving $\QMA$ amplification scheme in Section \ref{MWsection}.

Second, let us look at the reflections about the supports of $\Pi_0$ and $\Pi_1$:
\begin{eqnarray}
    F_0 &=& 2\Pi_0 - \ii, \label{reflect}\\
    F_1 &=& 2\Pi_1 - \ii. \nonumber
\end{eqnarray}
Using the decomposition of the Hilbert space from Jordan's lemma, we can prove 
the following theorem about the eigenvalues of the product of two reflections:
\begin{theorem}
	\label{producteigs} 
	Let $\Pi_0$ and $\Pi_1$ be projectors and $F_0$ and $F_1$ the reflections about their supports. 
	The Hilbert space can be decomposed into two-dimensional 
	and one-dimensional subspaces invariant under $\Pi_0$ and $\Pi_1$.
	The unitary operator $F_1 F_0 = (2 \Pi_1 - \ii)(2 \Pi_0 - \ii)$ 
	is a rotation $e^{i 2\pi\varphi_i \hat{\sigma}_y}$ with 
	eigenvalues $e^{\pm i 2 \pi \varphi_i }$ with $0<\varphi_i <\frac{1}{2}$ 
	corresponding to eigenvectors
	\begin{eqnarray}
		\kets{\varphi_i^+} = \frac{1}{\sqrt{2}} 
				\left( \kets{v_i} + i \kets{v_i^\perp}\right), \qquad
		\kets{\varphi_i^-} = \frac{1}{\sqrt{2}} 
				\left( \kets{v_i} - i \kets{v_i^\perp}\right), 
		\label{roteigs}
	\end{eqnarray}
	in the two-dimensional invariant subspaces $S_i$, 
	and eigenvalues $\pm 1$ in the one-dimensional invariant subspaces $T_j$.
\end{theorem}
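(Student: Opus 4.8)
The plan is to reduce the whole statement to a computation on each invariant subspace separately. Jordan's lemma, as invoked above, already splits $\cH$ into the pieces $S_i$ and $T_j$, each mapped to itself by both $\Pi_0$ and $\Pi_1$ and hence by the reflections \eqref{reflect}; so $F_1 F_0$ is block diagonal, and it suffices to diagonalize it block by block.

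First I would treat a two-dimensional block $S_i$ in the ordered basis $\{\ket{v_i},\kets{v_i^\perp}\}$. Because $\Pi_0=\ket{v_i}\bra{v_i}$ on $S_i$, equation \eqref{reflect} gives $F_0=\hat{\sigma}_z$ in this basis. For $F_1$ I would substitute $p_i=\cos^2(\pi\varphi_i)$ from \eqref{paccept} into \eqref{crossprobabilities}, so that $\ket{w_i}=\cos(\pi\varphi_i)\ket{v_i}+\sin(\pi\varphi_i)\kets{v_i^\perp}$; writing $\Pi_1=\ket{w_i}\bra{w_i}$ and doubling then yields $F_1=\cos(2\pi\varphi_i)\,\hat{\sigma}_z+\sin(2\pi\varphi_i)\,\hat{\sigma}_x$, i.e.\ a reflection whose axis is tilted by angle $\pi\varphi_i$ from the $\ket{v_i}$ axis. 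Using $\hat{\sigma}_x\hat{\sigma}_z=-i\hat{\sigma}_y$, the product collapses to $F_1F_0=\cos(2\pi\varphi_i)\,\ii-i\sin(2\pi\varphi_i)\,\hat{\sigma}_y=e^{-i2\pi\varphi_i\hat{\sigma}_y}$, a genuine rotation about the $y$ axis. From this closed form the eigenvalues $e^{\pm i2\pi\varphi_i}$ are immediate, and the $\pm i$ eigenvectors of $\hat{\sigma}_y$ are exactly the states $\frac{1}{\sqrt{2}}(\ket{v_i}\pm i\kets{v_i^\perp})$ of \eqref{roteigs}. The constraint $0<\varphi_i<\frac12$ is inherited from Jordan's lemma: the endpoints $\varphi_i=0$ (where $p_i=1$ and $\ket{v_i}=\ket{w_i}$) and $\varphi_i=\frac12$ (where $p_i=0$ and $\ket{v_i}\perp\ket{w_i}$) are precisely the degenerate cases that produce one-dimensional rather than genuinely two-dimensional invariant blocks.

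Next I would dispose of a one-dimensional block $T_j=\s\{\ket{t}\}$, where $\ket{t}$ is a common eigenvector of $\Pi_0$ and $\Pi_1$. Each reflection acts as $\pm1$ depending on whether the corresponding projector fixes or annihilates $\ket{t}$, so a short four-case check gives $F_1F_0\ket{t}=+\ket{t}$ when $\Pi_0$ and $\Pi_1$ agree on $\ket{t}$ (both fix it, or both annihilate it) and $F_1F_0\ket{t}=-\ket{t}$ when they disagree, which is exactly the claimed $\pm1$ spectrum on the $T_j$.

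The only real subtlety is convention-tracking rather than mathematics. The phase freedom fixed by making $\braket{v_i}{w_i}$ real and positive, the choice of ordering $F_1F_0$ versus $F_0F_1$, and the orientation of the $y$ rotation together determine the sign in the exponent; with my conventions I obtain $e^{-i2\pi\varphi_i\hat{\sigma}_y}$ rather than the $e^{+i2\pi\varphi_i\hat{\sigma}_y}$ written in the statement, so I expect the main care to go into aligning these sign choices so that each $\kets{\varphi_i^\pm}$ is paired with the intended eigenvalue $e^{\pm i2\pi\varphi_i}$. None of this affects the spectrum itself, which is what the downstream phase-estimation argument actually uses.
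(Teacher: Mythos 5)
Your proposal is correct and follows essentially the same route as the paper's Appendix~\ref{jordansection}: reduce to the Jordan blocks and carry out an explicit $2\times 2$ Pauli-matrix computation on each $S_i$ (the paper conjugates $\hat{\sigma}_z$ by $U=e^{i\theta_i\hat{\sigma}_y}$ and uses anticommutation where you expand $F_1=\cos(2\pi\varphi_i)\hat{\sigma}_z+\sin(2\pi\varphi_i)\hat{\sigma}_x$ directly, but this is the same calculation), followed by the same four-case check on the $T_j$. The sign discrepancy you flag is real but harmless: the paper's own proof also arrives at $e^{-i(2\theta_i)\hat{\sigma}_y}$ rather than the $e^{+i2\pi\varphi_i\hat{\sigma}_y}$ written in the theorem statement, and only the symmetric spectrum $e^{\pm i2\pi\varphi_i}$ is used downstream.
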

\noindent 
Thus, within each two-dimensional subspace $S_i$, 
the product of two reflections $F_1 F_0$ is a rotation
by $2\pi\varphi_i$, with $\varphi_i$ given in \eqref{principal}. 
We rely on this fact in our fast $\QMA$ amplification scheme in Section \ref{phasesection}.

This theorem, proved by Szegedy \cite{Szegedy} is a base component in many results
about quantum walks. We also point the interested reader to a new proof of Theorem \ref{producteigs} 
using Jordan's lemma in Appendix \ref{jordansection}.


\subsection{Witness-preserving QMA Amplification}
\label{MWsection}

 \begin{figure}
 	\begin{center}
 	\includegraphics[width=2.5in]{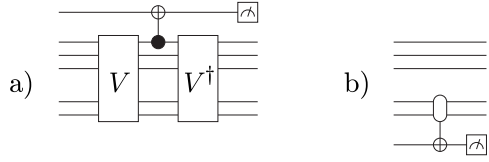} 
 	\end{center}
 	\caption{The circuits for measuring a) $\{\Pi_1,\ii-\Pi_1\}$ and b) $\{\Pi_0,\ii-\Pi_0\}$ in Marriott and Watrous' $\QMA$ amplification scheme.}
 	\label{figureMW}
 \end{figure}
We can now look at the $\QMA$ amplification scheme of Marriott and Watrous \cite{MWqma} 
which we want to speed up. 
Consider now the two projectors:
\begin{eqnarray}
    \Pi_0 &=& \ket{0}\bra{0}_{anc}, \label{twoP} \\
    \Pi_1 &=& V^\dagger \ket{1}\bra{1}_{out} V, \nonumber
\end{eqnarray}
with $\Pi_0$ projecting onto zeros on the ancilla qubits, and $\Pi_1$ projecting 
on the states that are accepted by the original verifier circuit. Note that the support
of $\Pi_1$ can contain states with nonzero ancillae. 
The promise bounds of the original circuit $V$ are $a$ and $b$.
The amplification procedure with the strong promise bounds \eqref{strongpromise} is the following:
\begin{enumerate}
	\item Combine a single input witness of length $n$ with fresh ancilla qubits.
	\item Perform a sequence of $N=\frac{cr}{(a-b)^2}$ alternating measurements 
			of $\{\Pi_0,\ii-\Pi_0\}$ and $\{\Pi_1,\ii-\Pi_1\}$ (see Figure \ref{figureMW}).
	\item Inspect the sequence of results, and count the number of times 
			when two consecutive results differ\footnote{To allow this number to reach $N$, attach a zero at the start of the sequence of measurement results.}.
			When this number is smaller than $N\frac{a+b}{2}$, 
			output `yes', otherwise output `no'. 
\end{enumerate}

Let us sketch why this scheme works. Recall the decomposition of the Hilbert space into 2-dimensional and 1-dimensional subspaces in Section \ref{projectorsection} and focus on the 2D subspaces $S_i$.
We can choose two bases for each subspace $S_i$ as in \eqref{vbasis} and \eqref{wbasis}.
Our $\Pi_0$ is such that the states $\ket{v_i}$ \eqref{vbasis} have the ancillae in the state zero, i.e. $\ket{v_i}=\ket{\psi_i}\ket{0}$. 
With our choice of $\Pi_1$, 
\begin{eqnarray}
	p_i = \bra{v_i}\Pi_1\ket{v_i} 
	= \bra{0}\bra{\psi_i}V^\dagger \ket{1}\bra{1}_{out} V \ket{\psi_i}\ket{0},
	\label{pmeaning}
\end{eqnarray}
defined in \eqref{paccept} is the probability that the original 
circuit $V$ accepts the witness $\ket{\psi_i}$.

We now choose some $\ket{v_i}$ as the initial state and perform the sequence of alternating measurements of $\Pi_0$ and $\Pi_1$, obtaining a bit string.
Throughout this procedure, the state of the system stays within the original 2-dimensional subspace $S_i$. 
Moreover, the identities in \eqref{crossprobabilities} imply that the probability of obtaining two consecutive 0's or 1's 
(projecting onto $\ket{w_i}$ from $\ket{v_i}$, etc.) is $p_i$, while the probability of obtaining $10$ or $01$ is $(1-p_i)$. The probability of a given sequence of measurement results is then 
\begin{eqnarray}
	p_{seq} = (1-p_i)^{z} p^{N-z},
\end{eqnarray}
where $z$ is the number of times consecutive measurement results differ.
When getting $z<\frac{N(a+b)}{2}$, we output `yes', and for $z>\frac{N(a+b)}{2}$
we output `no'. Marriott and Watrous use arguments based on Chernoff bounds to
show that these answers have exponentially good confidence bounds \eqref{strongpromise}. 
What remains is to show that in the `no' case, a superposition of the states $\ket{v_i}$ for different $i$'s will not help Merlin to fool Arthur.

To summarize, the scheme consists of $N$ measurements \eqref{Nkit}, and one half of them involves evaluating the circuit $V$ and its conjugate. Note that the processing of the sequence of results is based on classical statistical methods. In the next Section we show that using a natively quantum algorithm (phase estimation or filter states) results in faster amplification.


\subsection{Fast QMA Amplification Based on Phase Estimation}
\label{phasesection}

In our new amplification procedure, 
we utilize the same pair of projectors $\Pi_0$ and $\Pi_1$ \eqref{twoP} as Marriott and Watrous.
However, instead of measuring them directly, we build our circuit
using the reflections $F_0$ and $F_1$ \eqref{reflect} about their supports.
Theorem \ref{producteigs} 
tells us that within the two-dimensional invariant subspaces 
introduced in Section \ref{projectorsection},
the product $F_1 F_0$ is a rotation by an angle related 
to an acceptance probability for the original circuit $V$. 
This turns the problem of accepting or rejecting witnesses for the circuit $V$ into the problem of determining the properties of the rotation coming from the operator $F_1 F_0$. 
We will show that a small rotation corresponds 
to a high acceptance probability for the circuit $V$ and vice versa. 
We thus do phase estimation on the operator $F_1 F_0$, 
accepting or rejecting the witness depending on the phase we obtain.
Finally, to boost the probability of success, we concatenate several such phase estimation procedures,
obtaining the desired strong promise bounds \eqref{strongpromise}.

First, let us look at a `yes' case, and show that Merlin can convince us about it.
Just as in Section \ref{MWsection}, he chooses a witness $\ket{\psi_i}$,
which we combine with fresh ancillae to get
\begin{eqnarray}
	\ket{v_i} = \ket{\psi_i}\ket{0},
	\label{vi}
\end{eqnarray}
an eigenvector of $\Pi_0$, belonging to one of the two-dimensional invariant subspaces $S_i$ 
we introduced in Section \ref{projectorsection}.
From Theorem \ref{producteigs}, we know that 
within this subspace, $F_1 F_0$ has the form $e^{i2\pi\varphi_i \hat{\sigma}_y}$ 
and its eigenvalues are $e^{\pm i2\pi \varphi_i}$,
with $\varphi_i$ given by \eqref{principal}. 
Recalling \eqref{roteigs}, we can express $\ket{v_i}$ in terms of the 
eigenvectors $\kets{\varphi_i^\pm}$ of $F_1 F_0$ as
\begin{eqnarray}
	\ket{v_i}  
	= \frac{1}{2}
								\left( 
									\ket{v_i} + i \kets{v_i^{\perp}}
								\right)
 		+ \frac{1}{2} 
								\left( 
									\ket{v_i} - i \kets{v_i^{\perp}}
								\right)			
	 = \frac{1}{\sqrt{2}}\left(\kets{\varphi_i^+} + \kets{\varphi_i^-}\right).
	 \label{eigvec}
\end{eqnarray}
Merlin chooses his witness $\ket{\psi_i}$ so that the corresponding phase 
$\varphi_i$ is the smallest. According to \eqref{pmeaning} and \eqref{paccept}, 
such $\varphi_i$ corresponds to picking the witness $\ket{\psi_i}$ with the 
largest possible acceptance probability $p_i$.
Because we are talking about the `yes' case, this $\varphi_i^{(yes)}$ is upper bounded by
\begin{eqnarray}
	\varphi_i^{(yes)} = \frac{1}{\pi}\arccos \sqrt{p_i} \leq \frac{1}{\pi}\arccos \sqrt{a} = \varphi_a, 
	\label{thetaa}
\end{eqnarray}
where $a$ is the guaranteed acceptance probability for some witness in the definition of $\QMA$.
Similarly, for a `no' case, the smallest possible $\varphi_i^{(no)}$ is lower bounded by
\begin{eqnarray}
	\varphi_i^{(no)} = \frac{1}{\pi}\arccos \sqrt{p_i} \geq \frac{1}{\pi}\arccos \sqrt{b} = \varphi_b.
	\label{thetab}
\end{eqnarray}
Our approach will be to measure the rotation phase for the state $\ket{v_i}$ \eqref{eigvec},
and to resolve whether it is less than $\varphi_a$, or larger than $\varphi_b$.
Note that the phase estimation for the operator $F_1 F_0$ on the state \eqref{eigvec}
has two possible results $\pm \varphi_i$. However, because of the form of \eqref{paccept}, we are only interested in the absolute value $|\varphi_i|$. We can recognize we got an estimate of the negative phase $-\varphi_i$ when we measure $\varphi'\in \left[\frac{1}{2},1\right]$, and we then use the value $1-\varphi' \in \left[0,\frac{1}{2}\right]$ instead of it.

We get the state \eqref{eigvec} on the input, and our phase estimation outputs some value $\varphi'$.
To be convinced that $|\varphi'|$ is smaller than $\varphi_a$ or 
greater than $\varphi_b$, we need a precision guarantee 
\begin{eqnarray}
	\delta\varphi = \frac{\varphi_b-\varphi_a}{2}.
	\label{precision}
\end{eqnarray} 
Using the Taylor expansion of $\arccos{x}$ around $x=0$, we bound
\begin{eqnarray}
 \varphi_b-\varphi_a &=& \frac{1}{\pi}\left(\arccos{\sqrt{b}}-\arccos{\sqrt{a}}\right) \nonumber \\
 	&=& \frac{1}{\pi} \sum_{n=0}^{+\infty}
  \frac {(2 n)!} {2^{2n} (n!)^2}  \frac 1 {2 n+1} 
  	\left[
  			\left( \sqrt{a} \right)^{2n+1} 
  			-\left( \sqrt{b} \right)^{2n+1}
  	\right], 
   \nonumber\\
 &\geq& \frac{\sqrt{a}-\sqrt{b}}{\pi}  = \frac{a-b}{\pi\left(\sqrt{a}+\sqrt{b}\right)}, \label{upper}
\end{eqnarray} 
as all the coefficients in the sum are positive and we have $a>b$,
so we can lower bound the expression by the $n=0$ term.
Therefore, we need 
\begin{eqnarray}
		n = - 1 + \log \frac{1}{\delta \varphi}  
		= - 1 + \log \frac{2}{\varphi_b-\varphi_a}
		= \log \frac{1}{\varphi_b-\varphi_a}
		\leq \log\frac{\pi\left(\sqrt{a}+\sqrt{b}\right)}{a-b}
		\label{nprecision}
\end{eqnarray}
bits of precision for our phase estimation.
According to \cite{NCbook}, a phase estimation algorithm
precise to $n$ bits, with failure probability $\ep_{pe}$
requires
\begin{eqnarray}
    t = n + \log \left[ 2 + 1/(2\ep_{pe})\right]
	\label{tqubits}
\end{eqnarray}
ancilla qubits. The number of times we need to perform
a controlled-$(F_1F_0)$ operation is then
\begin{eqnarray}
	N_1 = 2^t - 1 
	\leq 2^n \left( 2 + \frac{1}{2\ep_{pe}}\right)
	\leq \frac{10\pi\left(\sqrt{a}+\sqrt{b}\right)}{a-b},	
	\label{phase1}
\end{eqnarray}
using \eqref{nprecision} and choosing $\ep_{pe}\leq\frac{1}{16}$
as an upper bound on the failure probability.

 \begin{figure}
 	\begin{center}
 	\includegraphics[width=2in]{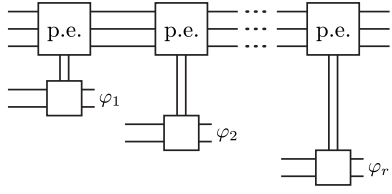} 
 	\end{center}
 	\caption{The scheme for concatenating $r$ phase estimations of $F_1 F_0$.}
 	\label{figurephase}
 \end{figure}
We want our procedure to work with exponentially small failure probability. 
To achieve this, we concatenate $r$ phase estimation circuits as in Figure 
\ref{figurephase} and take the median of the $r$ results (recall that
we use $|\varphi| \in \left[0,\frac{1}{2}\right]$ here). 
Because of the following lemma\footnote{Our median lemma is a 
variant of the {\em powering lemma} \cite{Valiant}.},
the median phase is a good estimate of the actual phase $\varphi_i$, 
with high probability.
\begin{lemma}[Median lemma]\label{lem:med}
	Consider a sequence $\{\varphi'_k\}$ for $k=1,\dots,r$.
	Let the probability that $\varphi'_k$ does not belong to an interval $LR=(\varphi_L,\varphi_R)$ 
	be $\ep$, with $\ep<\frac{1}{2}$.
	Then the probability that the median of $\{\varphi'_k\}$ falls out of $LR$ is
	bounded by
	$p_{fail} \leq \frac{1}{2} \left(2\sqrt{\ep(1-\ep)}\right)^{r}$.
\end{lemma}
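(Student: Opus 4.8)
The plan is to reduce the statement about the median to a tail bound for a binomial distribution and then control that tail term by term. Throughout I treat the $\varphi'_k$ as produced by independent runs, so that the indicator events $\{\varphi'_k \notin LR\}$ are independent Bernoulli trials, each of probability $\ep$; this independence is exactly what holds in our application, where each of the $r$ concatenated phase estimations uses its own fresh ancillae. Write $X$ for the number of indices $k$ with $\varphi'_k \notin LR$, so that $X \sim \mathrm{Binomial}(r,\ep)$.

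First I would establish the combinatorial reduction
\[
\{\,\text{median of }\{\varphi'_k\}\notin LR\,\}\ \subseteq\ \{\,X\ge \lceil r/2\rceil\,\}.
\]
Taking $r$ odd (which we may, since $r$ is a free parameter), the median is the unique middle order statistic. If it exceeds $\varphi_R$, then it together with all larger samples, that is at least $\lceil r/2\rceil$ of them, lie outside $LR$; symmetrically if it is below $\varphi_L$. Contrapositively, if strictly more than half the samples land in the open interval $(\varphi_L,\varphi_R)$, the middle order statistic is squeezed into $(\varphi_L,\varphi_R)$. Hence $p_{fail}\le \Pr[X\ge \lceil r/2\rceil]$.

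Next I would bound the upper tail. Because $\ep<\tfrac12<1-\ep$, the ratio $\ep/(1-\ep)<1$, so for every $j\ge r/2$,
\begin{eqnarray}
\ep^{j}(1-\ep)^{r-j}
=\big(\ep(1-\ep)\big)^{r/2}\left(\frac{\ep}{1-\ep}\right)^{j-r/2}
\le \big(\ep(1-\ep)\big)^{r/2}.
\end{eqnarray}
Summing the binomial tail and using that for odd $r$ the upper half of Pascal's row sums to exactly $\sum_{j\ge \lceil r/2\rceil}\binom{r}{j}=2^{r-1}$ (by the symmetry $\binom{r}{j}=\binom{r}{r-j}$, with no central term), I obtain
\begin{eqnarray}
\Pr[X\ge \lceil r/2\rceil]
=\sum_{j\ge \lceil r/2\rceil}\binom{r}{j}\ep^{j}(1-\ep)^{r-j}
\le 2^{r-1}\big(\ep(1-\ep)\big)^{r/2}
=\tfrac12\big(2\sqrt{\ep(1-\ep)}\big)^{r},
\end{eqnarray}
which is the claimed bound.

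The step needing the most care is the median-to-counting reduction, specifically the interplay between the median convention and the parity of $r$. For odd $r$ everything is clean and the factor $\tfrac12$ comes out exactly, as above; for even $r$ the median is an average of two central order statistics and the symmetric split of Pascal's row leaves behind the central term $\binom{r}{r/2}$, so the naive threshold $X\ge r/2$ overcounts and the clean constant is lost (indeed one checks that for $r=2$ the naive bound already fails). The cleanest route, and the one I would take, is simply to run an odd number $r$ of phase estimations, after which the term-wise estimate and the symmetric binomial identity assemble directly into the stated inequality; the remaining manipulations are routine.
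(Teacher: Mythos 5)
Your proof is correct and follows essentially the same route as the paper: reduce the median event to the event that at least half the samples miss $LR$, bound each binomial term by $\left(\ep(1-\ep)\right)^{r/2}$ using $\ep/(1-\ep)<1$, and sum the upper half of Pascal's row to $2^{r-1}$. The one point where you go beyond the paper is in flagging that the identity $\sum_{k\geq r/2}\binom{r}{k}=2^{r-1}$ (and hence the clean constant $\tfrac12$) requires $r$ odd --- the paper silently uses this identity without addressing parity, so your explicit restriction to odd $r$ is a genuine, if minor, tightening rather than a deviation.
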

\begin{proof}
	The only way the median could fall out of the interval $LR$ is by having more than half of the 
	datapoints $\varphi'_k$ falling out of it. This probability is bounded by
	\begin{eqnarray}
		p_{fail} &\leq& \sum_{k=\frac{r}{2}}^{r} \ep^k (1-\ep)^{r-k} {r\choose{k}} 
			= (1-\ep)^r \sum_{k=\frac{r}{2}}^{r} \left(\frac{\ep}{1-\ep}\right)^k {r\choose{k}} \nonumber \\
			&\leq& (1-\ep)^r \left(\frac{\ep}{1-\ep}\right)^{\frac{r}{2}} 
					\sum_{k=\frac{r}{2}}^{r} {r\choose{k}} 
			=  (1-\ep)^r \left(\frac{\ep}{1-\ep}\right)^{\frac{r}{2}} 2^{r-1} \\
			&=&  \frac{1}{2} \left(2\sqrt{\ep(1-\ep)}\right)^{r}, \nonumber
	\end{eqnarray}
	where we used the fact that $\ep<\frac{1}{2}$, so that $\frac{\ep}{1-\ep}<1$ and 
	$2\sqrt{\ep(1-\ep)}<1$.
\end{proof}

\noindent 
Recall that the precision \eqref{precision} we demanded for $\varphi'$ 
is such that when we get a phase smaller than 
$\frac{\varphi_a + \varphi_b}{2}$,
we conclude that we have a `yes' case.
Lemma \ref{lem:med} tells us that for $\ep_{pe}=\frac{1}{16}$, 
the probability of our median scheme failing (producing a bad estimate of the phase) 
is bounded from above by
\begin{eqnarray}
	p_{fail} \leq \frac{1}{2} \left(2\sqrt{\ep_{pe}(1-\ep_{pe})}\right)^{r}
		\leq 2^{-r}.
	\label{pfail}
\end{eqnarray} 
This gives us the first of the strong promise bounds \eqref{strongpromise},
with
\begin{eqnarray}
	N' = rN_1 = \frac{10\pi\left(\sqrt{a}+\sqrt{b}\right)r}{a-b}.	
	\label{phase1r}
\end{eqnarray}
uses of the verifier circuit and its inverse.
In the worst case (when the phases are hardest to tell apart), we have $a\approx b\approx \frac{1}{2}$, 
giving us $N' = \frac{5\pi r}{2(a-b)}$.
We can compare this to 
\eqref{Nkit}, where the actual value of the constant is $c=2$, 
and see that our method is better already for $a-b \leq \frac{1}{4}$.
Of course, our motivation to start thinking about a new $\QMA$ amplification method was
the case when $a-b$ is tiny. 

For the `no' case, the analysis is the same as above, if $\ket{\psi}\ket{0}$ (the witness combined with fresh ancillae) belongs to one of the two-dimensional invariant subspaces.
The smallest median phase we can get corresponds to the largest possible acceptance probability, 
which is lower than $b$. The probability of obtaining a median of the $r$ the phase estimations below $\frac{\varphi_a + \varphi_b}{2}$ (i.e. Arthur being fooled) is again upper bounded by \eqref{pfail}.
On the other hand, what if we get a superposition $\sum_{i} v_i \ket{v_i}$ 
with $\ket{v_i}$'s from different two-dimensional invariant subspaces on the input?
The final phase measurement on a superposition like 
$\sum_{i} v_i \ket{v_i} \ket{\varphi_i}$
gives us a classical mixture of results $\varphi_i$, weighted by $v_i^2$. Therefore, it's always better for Merlin to choose a single $i$ with the smallest possible $\varphi_i^{(no)}$ if he wants to have a chance of fooling us.
Nevertheless the probability to measure a phase smaller than $\frac{\varphi_a + \varphi_b}{2}$
in a `no' case is upper bounded by $\ep_{pe}$ for any $\ket{\alpha_i}$. 
Recalling what we did above, several concatenated phase estimations
allow us to detect that $\varphi' > \frac{\varphi_a+\varphi_b}{2}$ with high probability.
We then conclude that Merlin is just trying to fool us.



\section{Is QMA with one-sided error equal to QMA?}
\label{qma1section}

Merlin knows all about our verifier circuit. This means he also knows the
division of the Hilbert space corresponding to the two projectors $\Pi_0$ and $\Pi_1$
described in Section \ref{projectorsection},
and the bases $\{\ket{v_i},\kets{v_i^\perp}\}$ of the two-dimensional invariant subspaces.
Each of the vectors $\ket{v_i}=\ket{\psi_i}\ket{0}$ is a combination of the eigenvectors \eqref{eigvec}
of the product of two reflections $F_1 F_0$ \eqref{reflect}, with eigenvalues
$e^{\pm i 2\pi \varphi_i}$. Through \eqref{paccept} and \eqref{pmeaning},
the phase $\varphi_i$ is related to the probability
that the circuit $V$ accepts the witness $\ket{\psi_i}$.
If Merlin sent us $\ket{v_i}$ and told us what the phase $\varphi_i$ is, we could verify his claim.
However, the phase estimation circuit is not always perfect. Still, it works exactly,
if the phase we are estimating has an $n$-bit binary expansion. 
Thus, we could do the following:
\begin{enumerate}
	\item Merlin sends us the state $\ket{\psi_i}_{witness} \otimes \ket{\varphi_i}_{n}$,
				with $\ket{\varphi_i}_{n}$ an exact $n$-bit binary encoding of $\varphi_i$.
	\item Measure $\ket{\varphi_i}_{n}$ in the $z$ basis, obtaining the number $\varphi_i$.
				Continue if $p_i = \cos^2(\pi \varphi_i) \geq a$.
	\item Run the fast QMA amplification scheme. When the witness is $\ket{\psi_i}$, 
				the resulting phases we measure must all be equal to $\varphi_i$ or $-\varphi_i$,
				because the $n$-bit phase estimation should work perfectly 
				for $\varphi_i$ with an exact $n$-bit binary expansion.
	\item Only when the result agrees with the $\varphi_i$ which Merlin has sent, 
				we are convinced that the answer to the yes/no question is `yes'.
\end{enumerate}			
For the `yes' cases, the acceptance probability is exactly 1 when Merlin does everything right. 
On the other hand, if the answer is `no', the acceptance probability of the QMA amplification 
scheme is already exponentially small, as shown in Section \ref{phasesection}.
This finishes the proof of our Theorem \ref{qma1theorem}: for a special class of verifier circuits, 
$\QMA_\varphi$ is equal to $\QMA_1$.

Today, we do not know which circuits $V$ have the nice property described above. 
Nevertheless, there is a possibility that circuits without it 
can be (easily) slightly modified to have it.
Merlin could send us a classical hint about the modification, and we would do it, 
keeping the properties of the circuit required in the definition of $\QMA$.
If this were possible, $\QMA_1 = \QMA$.


\section{Preparing Witnesses for QMA}
\label{solvingsection}

Given a verifier circuit $V$ for a problem in $\QMA$, 
what could we do to actually prepare a witness which the circuit accepts?
Poulin and Wocjan investigated this \cite{PoulinWocjan} 
together with the problem of preparing ground states of local Hamiltonians.
This question is not simple. The first idea would be to do a basic Grover search \cite{Grover} for the 
states in the support of the projector $\Pi_1$ (the states on which the circuit $V$ outputs $1$).
However, this works only when $\Pi_1$ commutes with $\Pi_0$ (the projector on zeros on the ancillae).
When $[\Pi_1,\Pi_0]\neq 0$, the ancilla part of the states we get from Grover searching will very likely be nonzero, and the method does not produce a proper witness of the form $\ket{\psi_i}\ket{0}$. 
Poulin and Wocjan found a way for preparing the witness in general. First, they run the witness-preserving $\QMA$ amplification scheme of Marriott and Watrous \cite{MWqma} (see Section \ref{MWsection}) backwards, and then do Grover search for the part of the state with zero ancillae.
We simplify their method, showing how to search for $\QMA$
witnesses using a reverse of our fast $\QMA$ amplification, in a much smaller system with easier initialization.
This also unifies their approaches to ground state and $\QMA$ witness preparation.

Let us sketch the new {\em filter state} method.
After the phase estimation (before the final Fourier transform) in our fast $\QMA$ amplification, we expect to have the state
\begin{eqnarray}
	\ket{\varphi_i^+}_1 \underbrace{\frac{1}{2^n}\sum_{k=0}^{2^{n}-1} e^{2\pi i n \varphi_i} \ket{n}_2}_{
		\ket{f_{\varphi_i}}
	}
	\label{afterphase}
\end{eqnarray}
in superposition with a corresponding $\ket{\varphi_i^-}_1$ part.
The state $\ket{f_{\varphi_i}}$ is the {\em filter state} for phase $\varphi_i$.
Imagine now that we ran the phase estimation backwards, starting in \eqref{afterphase}. We would obtain $\ket{\varphi_i^+}_1\ket{0}_2$. 
What would happen if we ran phase estimation backwards on 
\begin{eqnarray}
		\ket{\alpha}_1\ket{f_{\varphi_i}}_2,
\end{eqnarray}
where $\ket{\alpha}$ is a random state and $\ket{f_{\varphi_i}}$ is the filter state?
We would get 
\begin{eqnarray}
		\underbrace{\braket{\varphi_i^+}{\alpha}}_{c_{\varphi_i}}\kets{\varphi_i^+}_1\ket{0}_2 
		+ c_{\varphi_i}^{\perp} \kets{\varphi_i^\perp}_1\ket{0}_2 + c_{\beta} \ket{\beta}_{12},
\end{eqnarray}
where the second register of $\ket{\beta}_{12}$ is nonzero.
If we created the filter state $\ket{f_{\varphi_i}}$ with a large $n$, 
the coefficient $c_{\varphi_i}^{\perp}$ must be small (with high probability). 
%
%
The last step is then to amplify the coefficient $c_{\varphi_i}$ by amplitude amplification for the 
states with zeros on the second register -- the phase estimation qubits. After we do this, we will obtain the state
\begin{eqnarray}
	\kets{\varphi_i^+}_1\ket{0}_{2},
\end{eqnarray}
with high probability, and from \eqref{eigvec} we know that 
\begin{eqnarray}
	\kets{\varphi_i^+} = \frac{1}{\sqrt{2}} \left( \kets{v_i} + \kets{v_i^\perp} \right)
	=
	\frac{1}{\sqrt{2}} \ket{\psi_i}\ket{0} 
	+ \frac{1}{\sqrt{2}} \kets{v_i^\perp}.
\end{eqnarray}
With probability $\frac{1}{2}$, projecting on the zeros of 
the ancillae then gives us the witness $\ket{\psi_i}$. In practice, we would have to repeat this 
probabilistic method many times, scanning a range of $\varphi$'s and verifying (with our fast $\QMA$ 
amplification method) whether we actually got a witness.

The filter state $\ket{f_{\varphi_i}}$ in our scheme sketched above is the Fourier transform of $\ket{\varphi_i}$.
In their first method for preparing ground states of many-body systems, Poulin and Wocjan 
use the same filter state, with the role of $\varphi_i$ played by the ground state energy.
Scanning a range of energies plays the same role as scanning through a range of phases in the
witness-preparation method. The analysis of the required number of phase estimation qubits $n$ 
to ensure that $c_{\varphi_i}^{\perp}$ is small, and the bounds on $c_{\varphi_i}$ 
thus works here as well.
We point the interested reader to Appendix C of \cite{PoulinWocjan} for the necessary details.

The original method for preparing witnesses in \cite{PoulinWocjan} differs from the method 
for preparing ground states of many-body systems. It uses {\em filter states} representing 
measurements outputs of the Marriott-Watrous scheme corresponding to a certain acceptance 
probability. For the filter states to work (see Appendix D in \cite{PoulinWocjan}), 
they need to introduce an extra register containing a large number of ancillae, 
scaling as $N$ in \eqref{Nkit}. The final Grover search in this method is over the space of these ancillae.

By using the reverse of our QMA amplification scheme, we achieve two things.
First, we unify the approaches to preparing ground states of many-body systems and QMA witnesses in \cite{PoulinWocjan}, by using the same type of filter states. 
Second, comparing \eqref{phase1r} and \eqref{Nkit} 
shows that we now need only $N' \propto~ (a-b) N$ ancilla qubits.
This greatly reduces the size of the final search space and speeds up the method.
Nevertheless, its running time obviously still stays exponential in $n$, the size of the problem.


\section{Discussion}
\label{conclusions}
First, we study the cost of amplifying the gap
between the acceptance probabilities in the `yes' and `no' cases for the
complexity class $\QMA$. Let $L$ be an arbitrary language in QMA. Given a
family $\{U_n\}$ of $\QMA(a,b)$-circuits accepting $L$, we show how to
construct a family $\{V_n\}$ of $\QMA(1-2^{-r},2^{-r})$-circuits accepting
$L$. Each of the circuits $V_n$ applies the original circuit $U_n$ or its inverse $U_n^\dagger$ at most $O\left(\frac{r}{a-b}\right)$
times. Thus, the complexity of our amplification method grows {\em linearly} in $\frac{1}{a-b}$.  This improves upon the
performance of the amplification method in \cite{MWqma} whose complexity grows {\em quadratically} in $\frac{1}{a-b}$. This quadratic speed-up is reminiscent of the speed-up in Grover's search algorithm and search algorithms employing quantum walks. This is not a coincidence. In fact, the intuition behind our amplification procedure is based on Szegedy's quantization of classical Markov chains -- quantum walks \cite{Szegedy}.  

To explain this intuition, let us first look at quantum walks from a point of view than slightly different  from the one usually taken in the literature. Roughly speaking, a quantum walk is derived from two projectors $P$ and $Q$ such that 
\begin{enumerate}
\item the unique state $|\pi\>$ with $\|P|\pi\>\|^2=1$ and $\|Q|\pi\>\|^2=1$ is the desired {\em quantum sample} of the stationary distribution of the corresponding classical walk,
\item all the other states $|\varphi\>$ with the property $\norm{P\ket{\varphi}}^2=1$ and orthogonal to $\ket{\pi}$ are necessarily contracted by $Q$, meaning that $\norm{Q\ket{\varphi}}^2 \leq 1-\delta$. 
\end{enumerate}
Here $\delta$ denotes the spectral gap of the corresponding classical walk.  To distinguish between $|\pi\>$ and $|\varphi\>$, we could alternatively measure the state according to the POVMs $\{P,\ii-P\}$ and $\{Q,\ii-Q\}$. When we have the state $\ket{\pi}$, we always obtain only the outcomes associated to $P$ and $Q$. On the other hand, for any of the states $\ket{\varphi}$, the probability of obtaining an outcome associated to $\ii-P$ or $\ii-Q$ is at least $\delta$. To obtain such an outcome with a constant probability, we have to make $O(\delta^{-1})$ measurements. Thus, the task of distinguishing between $\ket{\pi}$ and $\ket{\varphi}$ can be accomplished with complexity $O(\delta^{-1})$.

We can reduce this complexity with the help of the quantum walk $W=(2P-I)(2Q-I)$. The fact that $\norm{Q\ket{\pi}}^2 = 1$ translates into the fact that $\ket{\pi}$ is the unique eigenvector of $W$ with eigenvalue $1$ (and corresponding phase $0$). Moreover, the fact that $\norm{Q\ket{\varphi}}^2 \leq 1-\delta$ translates into the fact that $\ket{\varphi}$ is necessarily a superposition of eigenvectors of $W$ whose phases have absolute value greater than some $\Delta$. This {\em phase gap} $\Delta$ is related to the {\em eigenvalue gap} $\delta$ by a quadratic relation: $\Delta \ge \sqrt{\delta}$. This leads to the quantum speed-up, because we can now distinguish between the two cases by running phase estimation.
The required accuracy is $O(\Delta)$, and we can achieve this by invoking $W$ at most $O(\Delta^{-1}) = O(\delta^{-1/2})$ times.

The problem of verifying witnesses for $\QMA$ problems can be readily formulated with the help of two projectors.  $P$ is the projector onto all ancillae being in the state $|0\>$ and $Q$ is the projector onto the output qubit being in $|1\>$.  In the `yes' case, there is a state $|\psi\>$ with $\|P|\pi\>\|^2=1$ and $\|Q|\pi\>\|^2>a$ and in the `no' case we have $\|Q|\pi\>\|^2<b$ for all $|\psi\>$ with $\|P|\psi\>\|^2=1$. We could tell these two cases apart with constant probability by alternatively measuring the state according to the POVMs $\{P,\ii-P\}$ and $\{Q,\ii-Q\}$ at most $O\left(\frac{1}{(a-b)^2}\right)$ times.  In fact, this is exactly what the amplification procedure in \cite{MWqma} does.

Our fast $\QMA$ amplification extends the quadratic relation between the probability and phase gaps to the more general situation, where the acceptance probability is at least $a$ in the `yes' case and at most $b$ in the `no' case. Recall that the situation relevant to quantum walks corresponds to $a=1$ and $b=1-\delta$. In the general case ($a\leq 1$), we prove that the phases $\varphi_a$ and $\varphi_b$ corresponding to the probability bounds $a$ and $b$ satisfy the separation bound $\varphi_b-\varphi_a=\Omega(a-b)$.  By employing phase estimation, we can resolve these two cases by invoking $W$ at most $O\left(\frac{1}{a-b}\right)$ times.

Second, we study the complexity-theoretic question whether $\QMA$ is equal to $\QMA_1$. Based on our new amplification procedure, we show that in some special cases (when the largest possible acceptance probabilities $a^*$ in the `yes' cases satisfy a trigonometric identity), the acceptance probability in the `yes' case can be amplified to $1$. The idea is that in these special cases the probabilities $a^*$ translate into `nice' phases $\varphi_{a^*}$, which we can deterministically identify using phase estimation. 

In the future, we plan to examine whether we could exploit the quadratic relation between the probability and phase gaps in more general situations to obtain new faster quantum algorithms. We will also seek to determine ways of proving that the acceptance probability can be boosted to $1$ in new, less restrictive cases.


\section{Acknowledgments}
        D.~N. gratefully acknowledges support by 
        European Project QAP 2004-IST-FETPI-15848 and 
        by the Slovak Research and Development Agency under the contract No. APVV-0673-07.
        P.~W. gratefully acknowledges the support by NSF grants
        CCF-0726771 and CCF-0746600. Y.~Z. was partially supported by
        the NSF-China Grant-10605035. Part of this work
        was done while D.~N. was visiting University of Central Florida.


\appendix

\section{A proof of Theorem \ref{producteigs} using Jordan's lemma}
\label{jordansection}

Theorem \ref{producteigs} is an important result relevant to quantum
walks, and it was proved by Szegedy in \cite{Szegedy}.
We now prove it in a different way -- using Jordan's lemma.
Jordan's lemma has been recently used to analyze $\QMA$ amplification 
\cite{MWqma}, and here we show that it is useful for quantum walks as well. 
For a short proof of Jordan's lemma, see e.g. \cite{RegevNotes}.

\begin{lemma}[Jordan '75]
For any two Hermitian projectors $\Pi_1$ and $\Pi_0$, there exists an
orthogonal decomposition of the Hilbert space into one dimensional
and two dimensional subspaces that are invariant under both $\Pi_1$ and $\Pi_0$.
Moreover, inside each two-dimensional subspace, $\Pi_1$ and $\Pi_0$ are
rank-one projectors.
\end{lemma}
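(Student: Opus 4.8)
The plan is to reduce the statement to the spectral theorem for a single Hermitian operator built from the two projectors. I would take $M = \Pi_0 \Pi_1 \Pi_0$, which is Hermitian and positive semidefinite with spectrum contained in $[0,1]$, since $M = (\Pi_1\Pi_0)^\dagger(\Pi_1\Pi_0) \succeq 0$ and $\Pi_0 - M = \Pi_0(\ii-\Pi_1)\Pi_0 \succeq 0$, so $0 \preceq M \preceq \Pi_0 \preceq \ii$. The key algebraic identity driving everything is that for an eigenvector $\ket{u}$ of $M$ lying in the range of $\Pi_0$ (so $\Pi_0\ket{u}=\ket{u}$) with eigenvalue $\lambda$, one has $\Pi_0 \Pi_1 \ket{u} = \Pi_0 \Pi_1 \Pi_0 \ket{u} = \lambda \ket{u}$. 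This says that applying $\Pi_1$ and then $\Pi_0$ returns $\ket{u}$ to its own line, which is exactly what is needed for the span of $\ket{u}$ and $\Pi_1\ket{u}$ to be invariant under both projectors.

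First I would diagonalize $M$ on $\mathrm{ran}\,\Pi_0$, obtaining an orthonormal eigenbasis $\{\ket{u}\}$ with eigenvalues $\lambda\in[0,1]$. For each $\lambda\in(0,1)$ I would set $S=\s\{\ket{u},\Pi_1\ket{u}\}$; the identity above, together with $\Pi_1^2=\Pi_1$, shows that $S$ is two-dimensional and invariant under $\Pi_0$ and $\Pi_1$. The degenerate eigenvalues give one-dimensional pieces: $\lambda=1$ forces $\Pi_1\ket{u}=\ket{u}$, so both projectors act as the identity, while $\lambda=0$ forces $\Pi_1\ket{u}=0$, so $\Pi_0=\ii$ and $\Pi_1=0$ on that line. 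What remains to cover is $\ker\Pi_0$: I would observe that the span $W$ of all the pieces constructed so far contains $\mathrm{ran}\,\Pi_0$ and is invariant, hence $W^\perp\subseteq\ker\Pi_0$ is itself invariant, and decomposing $W^\perp$ into eigenvectors of $\Pi_1$ supplies the remaining one-dimensional subspaces, on which $\Pi_0=0$ and $\Pi_1\in\{0,\ii\}$.

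The hard part will be verifying that these pieces are mutually orthogonal and exhaust $\cH$. Orthogonality of the two-dimensional blocks for distinct eigenvalues $\lambda\neq\lambda'$ follows from $\braket{u}{\Pi_1 v}=\bra{u}M\ket{v}=\lambda'\braket{u}{v}=0$, which (via $\Pi_1^2=\Pi_1$ and Hermiticity) forces all four cross inner products between $\s\{\ket{u},\Pi_1\ket{u}\}$ and $\s\{\ket{v},\Pi_1\ket{v}\}$ to vanish. Completeness is then the genuine obstacle, and it is settled precisely by the $W^\perp$ argument above, showing that nothing survives outside $W$ except the legitimate one-dimensional subspaces. Finally, the rank-one claim inside each $S$ is immediate: $\Pi_0$ restricts to $\ket{u}\bra{u}$, and $\Pi_1\big|_S$ is a projector of rank at least one whose rank cannot be two, since rank two would mean $S\subseteq\mathrm{ran}\,\Pi_1$, hence $\Pi_1\ket{u}=\ket{u}$ and $\lambda=1$, contradicting $\lambda\in(0,1)$. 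I expect the bookkeeping of the degenerate cases and the completeness argument to be the main subtlety, whereas the core invariance identity is essentially a one-line calculation.
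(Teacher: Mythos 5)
Your argument is correct, but note that the paper does not actually prove this lemma: it only states it and defers to the cited lecture notes of Regev for ``a short proof.'' Your route --- diagonalizing $M=\Pi_0\Pi_1\Pi_0$ on $\mathrm{ran}\,\Pi_0$, using $\Pi_0\Pi_1\ket{u}=\lambda\ket{u}$ to get invariance of $\s\{\ket{u},\Pi_1\ket{u}\}$, peeling off the $\lambda\in\{0,1\}$ lines, and sweeping the rest into $\ker\Pi_0$ --- is exactly that standard argument, and all the steps check out. One small point of phrasing: you justify orthogonality of the two-dimensional blocks only ``for distinct eigenvalues $\lambda\neq\lambda'$,'' but you also need it for two orthonormal eigenvectors sharing a degenerate eigenvalue; your own identity $\braket{u}{\Pi_1 v}=\bra{u}M\ket{v}=\lambda'\braket{u}{v}=0$ already covers that case verbatim, so just state the hypothesis as $\braket{u}{v}=0$ rather than $\lambda\neq\lambda'$.
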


Consider now an $N$-dimensional Hilbert space $\cH$, a rank $r$ projector
$\Pi_1$ and a rank $s$ projector $\Pi_0$, with $1\le r, s \le N$. 
Jordan's lemma implies the existence of an orthonormal basis for the Hilbert space
$\cH$ which can be divided into five groups. 
\begin{enumerate}
\item Two-dimensional
		subspaces $S_i$ for $i=1,\dots,t$, invariant under both $\Pi_1$ and $\Pi_0$. Each
		subspace $S_i$ is spanned by the orthonormal eigenvectors $\ket{v_i}$ and $\kets{v_i^{\perp}}$
		of the projector $\Pi_1$, i.e. obeying 
		$\Pi_1 \ket{v_i} = \ket{v_i}$ and  $\Pi_1\kets{v_i^{\perp}} = 0$.
\item Four types of one dimensional subspaces $T^{(bc)}_i$, where $b,c\in\{0,1\}$. These 
		$N-2t$ subspaces are spanned by $\kets{v^{(bc)}_i}$, for $b,c\in\{0,1\}$,
		obeying
		\begin{eqnarray}
			\Pi_1 \kets{v^{(bc)}_i} = b \kets{v^{(bc)}_i}, \qquad
			\Pi_0 \kets{v^{(bc)}_i} = c \kets{v^{(bc)}_i}. 
			\label{onedim}
		\end{eqnarray}
\end{enumerate}

The 2D subspaces $S_i$ can be also spanned by the orthonormal
eigenvectors $|w_i\rangle, |w_i^\perp\rangle$ of the projector $\Pi_0$,
satisfying $\Pi_0 |w_i\rangle =|w_i\rangle$ and $\Pi_0 \kets{w_i^\perp} =0.$
Hence, we can recast the projectors $\Pi_1$ and $\Pi_0$ in the form 
\begin{eqnarray}
\label{basis_h}
 \Pi_1 &=& \sum_{i=1}^{t} \ket{v_i}\bra{v_i} 
 			+ \sum_{i} \kets{v^{(10)}_i}\bras{v^{(10)}_i} 
 			+ \sum_{i} \kets{v^{(11)}_i}\bras{v^{(11)}_i}, \\
 \Pi_0 &=& \sum_{i=1}^{t} \ket{w_i}\bra{w_i} 
 			+ \sum_{i} \kets{v^{(01)}_i}\bras{v^{(01)}_i} 
 			+ \sum_{i} \kets{v^{(11)}_i}\bras{v^{(11)}_i}.\nonumber
\end{eqnarray} 
in terms of our chosen orthonormal basis of the Hilbert space $\cH$.

We now rewrite Theorem \ref{producteigs} using the notation $\theta_i=\pi \varphi_i$,
and provide a new proof based on Jordan's lemma.
\begin{theorem3}
Consider two Hermitian projectors $\Pi_1$, $\Pi_0$ and the identity operator
$\ii$. The unitary operator $(2 \Pi_1-\ii)(2 \Pi_0-\ii)$ has eigenvalues $e^{\pm
i 2 \theta_i }$,  $0<\theta_i <\frac \pi 2$  in the two-dimensional
subspaces $S_i$ invariant under $\Pi_1$ and $\Pi_0$, and it has eigenvalues
$\pm 1$ in the one-dimensional subspaces invariant under $\Pi_1$ and $\Pi_0$.
\end{theorem3}

\begin{proof}
With a suitable choice of the orthonormal eigenvectors
$\{|w_i\rangle, |w_i^\perp\rangle \}$ of the projector $\Pi_0$ 
and defining\footnote{Note that in \eqref{principal}, we chose to use
the rescaled $\varphi_i=\frac{\theta_i}{\pi}$ instead, to unify the notation with phase estimation.}
 the principal angle $\theta_i$
\begin{eqnarray}
		\cos\theta_i=|\langle v_i | w_i \rangle|,
\end{eqnarray}
we can write the transformation law
\begin{eqnarray}
	\kets{v_i} &=& \phantom{-} \cos \theta_i \kets{w_i} + \sin \theta_i \kets{w_i^\perp}, \\
	\kets{v_i^\perp} &=& - \sin \theta_i \kets{w_i} + \cos \theta_i \kets{w_i^\perp}.
\end{eqnarray}
between the two bases as
\begin{eqnarray}
	\left(\begin{array}{l}
			\kets{w_i}\\
			\kets{w_i^\perp}
		\end{array}\right)
	= U   
		\left(\begin{array}{l}
 			\kets{v_i} \\
 			\kets{v_i^\perp}
		\end{array}\right),
\end{eqnarray}
where the transformation
\begin{eqnarray}
	U = e^{i \theta_i \hat{\sigma}_y} = \ii \cos \theta_i + i \hat{\sigma}_y \sin\theta_i
\end{eqnarray}
is a unitary rotation. 
Therefore, $(2\Pi_1-\ii)(2\Pi_0-\ii)$ expressed in the basis $\{\kets{v_i},\kets{v_i^\perp}\}$ is
\begin{eqnarray}
	(2\Pi_1-\ii)(2\Pi_0-\ii) 
	&=&
		\left[\begin{array}{rr}
				1 & 0\\ 
				0 & -1
		\end{array}\right]
		\underbrace{U
		\left[\begin{array}{rr}
				1 & 0\\ 
				0 & -1
		\end{array}\right]
		U^{\dagger}}_{2\Pi_0-\ii} \\
	&=& \hat{\sigma}_z e^{i \theta_i \hat{\sigma}_y} \hat{\sigma}_z e^{-i \theta_i \hat{\sigma}_y} 
	= \hat{\sigma}_z \hat{\sigma}_z e^{-i \theta_i \hat{\sigma}_y} e^{-i \theta_i \hat{\sigma}_y} \\
	&=& e^{-i (2\theta_i) \hat{\sigma}_y},  \label{PPform}
\end{eqnarray}
where we used the anticommutation relationship $\hat{\sigma}_z \hat{\sigma}_y + \hat{\sigma}_y \hat{\sigma}_z = 0$.
In each two-dimensional subspace $S_i$, the operator $(2\Pi_1-\ii)(2\Pi_0-\ii)$ is thus a rotation 
by $2\theta_i$. Moreover, its form is \eqref{PPform} also in the basis
$\{\kets{w_i},\kets{w_i^\perp}\}$, because $U^\dagger e^{-i (2\theta_i) \hat{\sigma}_y} U
=e^{-i (2\theta_i) \hat{\sigma}_y}$.

To conclude the proof, let us look at the action of $(2\Pi_1-\ii)(2\Pi_0-\ii)$ 
on the one-dimensional invariant subspaces.
Using \eqref{onedim}, we find that it acts as an identity on the subspaces $T^{(00)}_i$ and $T^{(11)}_i$
and as $-\ii$ on the subspaces $T^{(01)}_i$ and $T^{(10)}_i$.
\end{proof}


\end{document}